\documentclass[runningheads]{llncs}
\usepackage[T1]{fontenc}
\usepackage{amsmath,amssymb}
\usepackage{graphicx}
\usepackage{tikz}
\usepackage[hidelinks]{hyperref}
\usepackage{cleveref}

\newcommand{\CFD}{\mathrm{CFD}}
\newcommand{\CDFD}{\mathrm{CDFD}}
\newcommand{\LGD}{\mathrm{LGD}}
\newcommand{\LFD}{\mathrm{LFD}}
\newcommand{\OPT}{\mathrm{OPT}}
\newcommand{\blocks}{\mathcal{B}}
\newcommand{\ch}{\mathrm{ch}}

\begin{document}

\title{Claw-Free and Line Graph Edge Deletion on Block Graphs}
\titlerunning{Claw-Free and Line Graph Edge Deletion on Block Graphs}
\author{Rajat Adak} 
\authorrunning{Rajat Adak}
\institute{Department of Computer Science and Automation, Indian Institute of Science, Bengaluru, India\\
\email{rajatadak@iisc.ac.in}} 

\maketitle

\begin{abstract}
Edge deletion problems ask for the fewest edges whose deletion gives a graph with a specified property.
We study two such problems on block graphs. First making the graph claw-free, and second making it a line graph.
Since every line graph is claw-free, the second problem is at least as difficult as the first.
We show that on block graphs the two optimum values are always equal, and both are attained by the line graph of a forest.
The main step is a completion argument, in which starting from any claw-free solution, we put back edges to get some nice structure inside each block still maintaining the claw-free property.
This reduces both edge-deletion problems to the same problem of selecting vertices within the blocks of the input graph.
We solve it by dynamic programming in $O(n+m)$ time, which also produces an optimal deletion set and a forest whose line graph is what remains after the completion argument.
We also extend the method to $K_{1,r+1}$-free edge deletion for every fixed $r$.
\keywords{Edge deletion \and Claw-free graphs \and Line graphs \and Block graphs \and Dynamic programming}
\end{abstract}

\section{Introduction}\label{sec:intro}

Graph modification problems ask for a minimum set of changes to a graph so that the resulting graph has a desired property or belongs to a prescribed graph class.
 Depending on the allowed operation, one may delete vertices, add edges (\emph{completion}), delete edges, or both add and delete edges (\emph{editing}), see, for example,~\cite{natanzon2001complexity,crespelle2023survey}. In the edge-deletion problems, given a graph $G$ and a target class $\mathcal{C}$, we try to find a smallest set $D\subseteq E(G)$ for which $G-D\in\mathcal{C}$. Even when membership in $\mathcal{C}$ is easy to test, finding the minimum deletion set can be difficult. For instance, $H$-free edge deletion is NP-complete on general graphs whenever the fixed forbidden induced graph $H$ has at least two edges~\cite{aravind2017dichotomy}. This includes claw-free edge deletion. Edge deletion to several tree-like target classes is also hard on general inputs~\cite{koch2024edge}.

For an induced-subgraph condition, deleting edges can create an obstruction as well as remove one. If a vertex has three neighbors with just one edge between them, deleting that edge creates an induced claw. Thus a solution cannot in general be obtained just by destroying the claws present in the original graph. We study how the structure of a \emph{block graph} makes it possible to optimize all deletion problems together.

A \emph{block} is a maximal connected subgraph without a cut vertex. A block on two vertices is called a bridge. A \emph{block graph} is a graph whose blocks are all cliques. It is easy to check that trees and cliques are trivially block graphs. The cliques  of a block graph intersect at cut vertices, and their incidences have a forest structure. This provides a natural setting for dynamic programming, although edges deleted within a clique may leave a subgraph with no obvious structure.

Our two target classes are claw-free graphs and line graphs. A \emph{claw} is an induced $K_{1,3}$, so a graph is \emph{claw-free} precisely when no vertex has three pairwise nonadjacent neighbors. The \emph{line graph} $L(R)$ of a graph $R$ has a vertex for each edge of $R$. Two vertices of $L(R)$ are adjacent when the corresponding edges of $R$ share an endpoint. We call $R$ a \emph{root graph}. it is easy to see that every line graph is claw-free. The neighbors of the vertex corresponding to an edge $xy$ are covered by the clique (in $L(R)$) of edges incident with $x$ and the clique of edges incident with $y$. The containment is strict. For example, the wheel on $6$ vertices,  $W_6 \cong K_1\vee C_5$ is claw-free, but the neighborhood of its hub is a $C_5$ and cannot be covered by two cliques (see \Cref{sec:limits}).

Several related modification problems have been studied on or towards block graphs. The classical characterization of connected claw-free block graphs as line graphs of trees is due to Chartrand, check~\cite[Theorem~8.5]{harary1969graph}. Bonomo et al.~\cite{bonomo2015complexity} gave a polynomial-time algorithm for weighted \emph{cluster deletion} on block graphs, with disjoint unions of cliques as the target. Bonomo-Braberman et al. obtained a linear-time algorithm for claw-free \emph{vertex} deletion on weighted block graphs~\cite{bonomo2024linear}. Dumas et al. studied edge modification \emph{towards} block graphs, with arbitrary input graphs and block graphs as the target. They showed that edge deletion and editing to block graphs are NP-complete, and gave $O(k^2)$-vertex kernels for these variants when parameterized by the number $k$ of changes~\cite{dumas2025polynomial}. For general inputs, Cygan et al. obtained a polynomial kernel for simultaneous claw-and-diamond-free edge deletion, and Eiben and Lochet obtained one for line-graph edge deletion~\cite{cygan2017polynomial,eiben2020polynomial}. Here the input is a block graph and the task is to find a spanning claw-free or line subgraph after deleting least number of edges. We show that every claw-free solution can be completed to a claw-free block graph, and optimize this structure in linear time.

\subsection{Our Result}
For a graph $G$ with $n$ vertices and $m$ edges, let $\CFD(G)$ be the minimum number of edges whose deletion makes $G$ claw-free, that is,
\[
\CFD(G) = \min\left\{|D| \mid D\subseteq E(G), G-D \text{ is claw-free}\right\}.
\]
Let $\LGD(G)$ be the minimum number of edges whose deletion makes $G$ a line graph, that is,
\[
\LGD(G) = \min\left\{
|D| \mid D\subseteq E(G), G-D \text{ is a line graph}\right\}.
\]
We also define $\LFD(G)$ as the minimum number of edges whose deletion makes $G$ the line graph of a forest:
\[
\LFD(G) = \min\left\{ |D| \mid D\subseteq E(G), G-D \cong L(F) \text{ for some forest } F \right\}.
\]
A line graph of a forest is a line graph, and every line graph is claw-free, so for every graph
\begin{equation}\label{ineq}
    \CFD(G)\le\LGD(G)\le\LFD(G).
\end{equation}
The inequalities can be strict on general graphs. On block graphs, all three values coincide.

\begin{theorem}\label{main}
For every block graph $G$,
\[
\CFD(G)=\LGD(G)=\LFD(G).
\]
Moreover, an optimal set $D$ of edges to delete, together with a forest $R$ such that $G-D\cong L(R)$, can be computed in $O(n+m)$ time.
\end{theorem}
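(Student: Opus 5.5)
Because of \eqref{ineq}, it suffices to prove $\LFD(G)\le\CFD(G)$ and to compute an optimal line-graph-of-forest solution in linear time. The plan has two parts. First, a structural completion lemma: any claw-free spanning subgraph $H=G-D$ of a block graph $G$ can be turned, by adding back edges of $G$, into a claw-free subgraph $H'$ with $E(H)\subseteq E(H')\subseteq E(G)$ whose blocks are cliques, i.e.\ $H'$ is a claw-free block graph. By Chartrand's characterization~\cite[Theorem~8.5]{harary1969graph}, every component of a claw-free block graph is the line graph of a tree. Hence $H'\cong L(R)$ for a forest $R$, and the deletion set $E(G)\setminus E(H')$ has size at most $|D|$. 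This gives $\LFD(G)\le\CFD(G)$ and so equality throughout.

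For the completion lemma, I would work block by block. Fix a block $B$ of $G$ (a clique) and look at $H[B]$. Since $G$ is a block graph, any path of $H$ between two vertices of $B$ must stay inside $B$, so the components of $H$ meet $B$ in the vertex sets of the components of $H[B]$. The first claim is that $H[B]$ has no three pairwise nonadjacent vertices that share a common neighbour \emph{anywhere} in $H$. The move is then to replace $H[B]$ by the disjoint union of cliques on its connected components, i.e.\ to make each component of $H[B]$ complete.

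I must check that this creates no claw. Take a vertex $v$ in a component $C$ of $H[B]$. Its new neighbours inside $B$ form the clique $C\setminus\{v\}$. Its neighbours outside $B$ lie in other blocks through $v$ (only possible if $v$ is a cut vertex), and those are unchanged except that they, too, become cliques. So a claw at $v$ would need three pairwise nonadjacent neighbours, at most one from each block through $v$. I therefore need that $v$ has nonempty neighbourhood in at most two blocks of $G$ after completion, or more precisely in at most two "clique pieces".

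This is where the original claw-freeness of $H$ is used. If $v$ had neighbours in three distinct blocks of $G$, those three neighbours would be pairwise nonadjacent in $G$, hence in $H$, and would already form a claw. If $v$ had neighbours in two different components of $H[B]$ — impossible, since $v$ lies in exactly one of them. So in every block through $v$, the neighbours of $v$ lie in $v$'s own component there. After completion they form one clique per block, and there are at most two such blocks. I expect this counting argument to be the main obstacle: completing a component could enlarge the neighbourhood of $v$ and create a claw using vertices of $C$ that were not adjacent to $v$ before. The key observation to verify carefully is that the number of blocks in which $v$ has a neighbour only grows from $0$ to $1$ and never past what was already realized in $H$. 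Indeed, $C$ is connected, so if $|C|\ge2$ then $v$ already had a neighbour in $C$.

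Second, the algorithmic part. After completion, an optimal solution is determined by choosing, for each block $B$, a partition of $B$ into parts that become cliques, with cost $\binom{|B|}{2}-\sum_P\binom{|P|}{2}$. The constraint is that each vertex has a part of size at least $2$ in at most two blocks. Given that constraint, the optimal partition in a block makes all vertices "active" in $B$ a single clique and leaves the rest as singletons, since merging only saves deletions. So the problem is: choose for each block an active set $A_B\subseteq B$, with each vertex active in at most two blocks, to maximize $\sum_B\binom{|A_B|}{2}$. Non-cut vertices are always active. I would solve this by a dynamic programme over the rooted block–cut tree, where for each cut vertex the state records whether it is active in its parent block. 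In a block, the children cut vertices are sorted into those spending a slot on $B$. Since the gain $\binom{k}{2}$ is convex, I would add them greedily by their marginal loss, computed with counting arrays so that each block costs time linear in its size. The total is $O(n+m)$. The root forest $R$ is read off from the maximal cliques of the result, with one vertex of $R$ per clique $A_B$ and per cut vertex used once, as in the standard construction of a tree from a claw-free block graph.
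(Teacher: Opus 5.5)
\paragraph{Equality part.} Your proof of $\CFD(G)=\LGD(G)=\LFD(G)$ is correct, but it is organized differently from the paper's.

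The paper completes, in each block $B$, the whole set $S_B$ of vertices incident with an $H$-edge in $B$ into one clique. Any claw-free solution thus becomes a selection in one step, and $\CFD(G)\ge m-\OPT(G)$ follows at once. The paper then gets $\LFD(G)\le m-\OPT(G)$ by building the root forest explicitly in Lemma~\ref{lem3}, with acyclicity inherited from the block-vertex forest.

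You instead complete each component of $H[B]$ separately and apply Chartrand's characterization to the claw-free block graph $H'$. This gives $\LFD(G)\le\CFD(G)$ directly, but you then need a second merging step to reach the selection problem the algorithm solves. Your route is shorter because it outsources the root construction to a classical theorem. The paper's route is self-contained and already produces $R$ in $O(n)$ time, as the statement demands.

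Your closing description of $R$ is essentially Lemma~\ref{lem3}, with two corrections. Every vertex lying in exactly one $A_B$ with $|A_B|\ge 2$ needs a private leaf, not only cut vertices. A vertex in no such set needs its own $K_2$.

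You also assert without proof that the blocks of $H'$ are cliques. This holds because every cycle of $H'\subseteq G$ lies in a single block of $G$, and inside that block $H'$ is a disjoint union of cliques.

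\paragraph{Algorithmic part.} This is the paper's two-state dynamic program, but the running-time justification is missing one ingredient.

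Counting arrays give $O(|B|)$ time per block only if the keys are $O(|B|)$. The costs are not bounded this way: $c_u=g_2(u)$ can be as large as the size of a child block of $u$. The paper handles this with Lemma~\ref{cutoff}. A child with $c_u\ge |B|$ is never chosen, since dropping it gains $c_u$ and loses at most $|B|-1$ edges of $B$. So such children can be discarded, and the remaining keys lie in $\{0,\dots,|B|-1\}$.

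Also, convexity of $\binom{k}{2}$ is not why cheapest-first is right, and it does not license stopping at the first unprofitable child. The correct reason is that, for a fixed number $j$ of chosen children, the gain $\binom{j+x}{2}$ does not depend on which children are chosen, so the $j$ cheapest are best. However, $\binom{j+x}{2}-\sum_{i\le j}q_i$ need not be unimodal in $j$. For example, four children of cost $1$ in state $0$ give the values $0,-1,-1,0,2$. So you must evaluate every prefix and take the maximum, as in \eqref{eqn9}.
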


For example, let $G$ consist of three triangles $\{c,a_i,b_i\}$, $i=1,2,3$, sharing only $c$. Any single edge deletion leaves a neighbor of $c$ in each triangle, and these three neighbors induce a claw. Deleting $ca_3$ and $cb_3$ leaves two triangles sharing $c$ and a separate edge $a_3b_3$, which is claw-free. Thus the optimum is two and the remaining graph is also the line graph of a forest.

Here is the idea of the proof. Start with any claw-free spanning subgraph $H$ of a block graph $G$. For each block $B$, let $S_B$ consist of the vertices incident with an edge of $H$ inside $B$. If a vertex belonged to three such sets, one of its neighbors from each of the three blocks would form a claw in $H$. Hence each vertex belongs to at most two sets $S_B$.

Complete every $S_B$ to a clique, adding back only edges of the original block $B$. The resulting graph $J$ contains $H$ and remains claw-free, since every vertex has neighbors in at most two completed cliques. In fact, $J$ is the line graph of a forest. To see this represent each completed clique by a vertex of the root graph, and each vertex of $J$ by an edge incident with the one or two clique vertices that contain it, using a private endpoint when needed. The block-vertex forest of $G$ ensures that this root has no cycle.

We are thus left with choosing sets $S_B\subseteq B$ so that no vertex is chosen in more than two blocks while $\sum_B\binom{|S_B|}{2}$ is as large as possible. A dynamic program on the block-vertex forest solves this selection problem in $O(n+m)$ time. The same selection method, with a larger bound on the number of blocks using each vertex, also handles induced $K_{1,r+1}$-free edge deletion for every fixed $r$ (\Cref{sec:stars}) At $r=1$ it specializes to unweighted cluster deletion.

\section{Blocks and the Block-Vertex Incidence graph}\label{sec:blocks}

If we delete a set $D$ of edges from $G$, we get the graph $G-D$ on the same vertex set, called a spanning subgraph of $G$.
Deleting few edges is the same as keeping many, therefore
\[
\CFD(G)=m-\max\{\,|E(H)| : H \text{ is a claw-free spanning subgraph of } G\,\},
\]
and similarly for $\LGD(G)$ and $\LFD(G)$.
From now on we will think about the edges we keep and not the ones deleted, unless otherwise stated.

We can assume $G$ does not contain any isolated vertices. Let $\blocks$ be the set of blocks in $G$. Note that every edge of $G$ lies in exactly one block. Also, two different blocks share at most one vertex, which is then a cut vertex.
The blocks can be found in $O(n+m)$ time by depth-first search.
In a block graph every block $B$ is a clique, so it contains all $\binom{|B|}{2}$ edges between its vertices.

The \emph{block-vertex tree} $T=T(G)$ has the vertex set $V(G)\cup\blocks$, and a vertex $v$ is joined to a block $B$ exactly when $v\in B$.
A cycle in $T$ would yield a cycle in $G$ using edges from distinct blocks, whereas all edges on a cycle lie in one block. Thus $T$ must be a forest. It has one tree per component of $G$. See \Cref{fig:tree} for an example.
If $G$ has $c$ connected components, then $T$ has $n+|\blocks|$ vertices and $n+|\blocks|-c$ edges.
The edges of $T$ are exactly the pairs $(v,B)$ with $v\in B$, hence
\begin{equation}\label{eqn1}
\sum_{B\in\blocks}|B| = n+|\blocks|-c .
\end{equation}
Since every block has at least two vertices, $2|\blocks|\leq\sum_B|B|$ and hence $|\blocks|\le n-c$.
Thus
\begin{equation}\label{eqn2}
\sum_{B\in\blocks}|B| = n+|\blocks|-c \leq 2(n-c).
\end{equation}
So the total size of all blocks is only $O(n)$, even though $m$ can be much larger.

\begin{figure}[t]
\centering
\begin{tikzpicture}[
  scale=0.72,
  v/.style={circle,draw,fill=white,inner sep=0pt,
            minimum size=4.6mm,font=\scriptsize},
  bl/.style={rectangle,draw,fill=gray!20,inner sep=0pt,
             minimum size=4.6mm,font=\scriptsize}
]
\begin{scope}
\node[v] (1) at (-1.3,0.7) {1};
\node[v] (2) at (-1.3,-0.7) {2};
\node[v] (3) at (0,0) {3};
\node[v] (4) at (0,1.4) {4};
\node[v] (5) at (1.4,1.4) {5};
\node[v] (6) at (1.4,0) {6};
\node[v] (7) at (2.7,0) {7};
\node[v] (8) at (2.1,2.6) {8};
\node[v] (9) at (0.7,2.6) {9};

\draw (1)--(2)--(3)--(1);
\draw (3)--(4)--(5)--(6)--(3);
\draw (3)--(5);
\draw (4)--(6);
\draw (6)--(7);
\draw (5)--(8)--(9)--(5);
\node at (0.7,-1.3) {\small (a) block graph $G$};
\end{scope}

\begin{scope}[xshift=6.6cm]
\node[bl] (B2) at (0,0.8) {$B_2$};
\node[v] (t3) at (-1.1,0.8) {3};
\node[bl] (B1) at (-2.1,0.8) {$B_1$};
\node[v] (t1) at (-3.0,1.5) {1};
\node[v] (t2) at (-3.0,0.1) {2};
\node[v] (t6) at (1.1,0.8) {6};
\node[bl] (B3) at (2.1,0.8) {$B_3$};
\node[v] (t7) at (3.0,0.8) {7};
\node[v] (t4) at (0,-0.3) {4};
\node[v] (t5) at (0,1.9) {5};
\node[bl] (B4) at (0,2.9) {$B_4$};
\node[v] (t8) at (-0.7,3.7) {8};
\node[v] (t9) at (0.7,3.7) {9};

\draw (t1)--(B1)--(t2);
\draw (B1)--(t3)--(B2);
\draw (B2)--(t4);
\draw (B2)--(t5)--(B4);
\draw (B4)--(t8);
\draw (B4)--(t9);
\draw (B2)--(t6)--(B3)--(t7);
\node at (0,-1.3) {\small (b) block-vertex tree $T(G)$};
\end{scope}
\end{tikzpicture}
\caption{A block graph and its block-vertex tree. Round vertices are vertices of $G$ and square vertices are the blocks $B_1=\{1,2,3\}$, $B_2=\{3,4,5,6\}$, $B_3=\{6,7\}$ and $B_4=\{5,8,9\}$.}
\label{fig:tree}
\end{figure}

\begin{lemma}\label{lem1}
Let $B_1,\dots,B_t$ be distinct blocks that all contain a vertex $v$, and let $u_i\in B_i\setminus\{v\}$ for each $i$.
Then $u_1,\dots,u_t$ are pairwise nonadjacent.
\end{lemma}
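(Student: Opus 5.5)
We argue by contradiction, using the earlier fact that every edge of $G$ lies in exactly one block and that every cycle of $G$ has all its edges in a single block (the fact used to show $T$ is a forest). Fix $i\neq j$ and suppose $u_iu_j\in E(G)$.

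First we note that $u_i\neq u_j$. If $u_i=u_j=u$, then $u$ and $v$ both lie in $B_i$ and in $B_j$. Two distinct blocks share at most one vertex, so this is impossible. Hence $u_i$ and $u_j$ are distinct, and we may speak of the edge $u_iu_j$.

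Now $B_i$ is a clique containing $v$ and $u_i$, so $vu_i$ is an edge of $B_i$. Likewise $vu_j$ is an edge of $B_j$. Together with the assumed edge $u_iu_j$, the three vertices $v,u_i,u_j$ form a triangle, that is, a cycle in $G$. All edges of a cycle lie in a single block, so $vu_i$ and $vu_j$ would lie in the same block. But each edge lies in exactly one block, and $vu_i\in B_i$, $vu_j\in B_j$ with $B_i\neq B_j$, a contradiction. Therefore $u_iu_j\notin E(G)$ for all $i\neq j$.

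There is no real obstacle here. The only point needing care is to cite precisely the two facts about blocks: distinct blocks meet in at most one vertex, and a cycle cannot use edges from two different blocks. The latter also follows directly from the definition, since the union of $B_i$ with the triangle is $2$-connected, which contradicts the maximality of $B_i$ because $u_j\notin B_i$. We would cite whichever fact is cleaner at this point in the paper.
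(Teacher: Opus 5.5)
Your proof is correct and follows the paper's argument: the edge $u_iu_j$ would close a triangle with $vu_i$ and $vu_j$, which forces both edges at $v$ into a single block and hence gives $B_i=B_j$. Your extra check that $u_i\neq u_j$, using that distinct blocks share at most one vertex, is left implicit in the paper; it is a sound addition and is useful for the later claw argument.
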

\begin{proof}
Suppose $u_iu_j$ were an edge for some $i\neq j$.
Then the three edges $vu_i$, $vu_j$ and $u_iu_j$ form a triangle, and a triangle lies inside a single block.
This would put $B_i$ and $B_j$ in the same block, contradicting that they are distinct. \qed
\end{proof}

\section{Completing Claw-Free Solutions}\label{sec:completion}

An arbitrary claw-free solution $H$ need not respect the block structure of $G$. The surviving edges in a block need not form a clique. The next lemma shows that we may complete the active vertices inside each block without losing claw-freeness.

\begin{lemma}\label{completion}
Let $G$ be a block graph and let $H$ be any claw-free spanning subgraph of $G$.
For every block $B\in\blocks$ let
\[
S_B=\{v\in B : \text{ there exists an edge of $H$ incident to $v$ that lies in } B\},
\]
and let $J$ be the graph on $V(G)$ obtained from $H$ by adding all the edges inside  $S_B$ for all $B \in \blocks$.
Then $H\subseteq J\subseteq G$, the graph $J$ is claw-free, and every vertex of $G$ belongs to at most two of the sets $S_B$.
\end{lemma}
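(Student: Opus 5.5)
The plan is to prove the three claims in the order: containment, the bound of two on the sets $S_B$, and then claw-freeness of $J$. This last step is the only one needing care.

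\emph{Containment.} $H\subseteq J$ holds because $J$ is obtained from $H$ by adding edges. For $J\subseteq G$, note that $H\subseteq G$, and every added edge joins two vertices of some $S_B\subseteq B$. Since $B$ is a clique of $G$, that edge lies in $G$.

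\emph{At most two sets.} Suppose $v\in S_{B_1}\cap S_{B_2}\cap S_{B_3}$ for distinct blocks. By definition there are $H$-edges $vu_i$ with $u_i\in B_i\setminus\{v\}$. By \Cref{lem1}, $u_1,u_2,u_3$ are pairwise nonadjacent in $G$, hence in $H$. Then $v,u_1,u_2,u_3$ induce a claw in $H$, a contradiction.

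\emph{Claw-freeness of $J$.} First I will show that every edge of $J$ inside a block $B$ has both ends in $S_B$. For an added edge this holds by construction. For an edge $xy\in E(H)$ lying in $B$, both $x$ and $y$ are incident with an $H$-edge in $B$, so $x,y\in S_B$. Since every edge of $G$ lies in exactly one block, the $J$-neighbourhood of a vertex $v$ is
\[
N_J(v)=\bigcup_{B\,:\,v\in S_B}\bigl(S_B\setminus\{v\}\bigr).
\]
By the previous step, at most two blocks occur in this union, and each $S_B$ is a clique in $J$. Any three neighbours of $v$ therefore contain two from the same $S_B$ by pigeonhole, and those two are adjacent. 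So $v$ is not the centre of a claw in $J$.

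The main point to verify is that the description of $N_J(v)$ is exact. It relies on the observation that $J$ has no edge within $B$ leaving $S_B$, together with the uniqueness of the block containing each edge.
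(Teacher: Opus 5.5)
Your proof is correct and follows the paper's argument essentially step for step: $J\subseteq G$ because each $S_B$ sits inside the clique $B$, the bound of two comes from \Cref{lem1} producing an induced claw in $H$, and $J$ is claw-free because $N_J(v)$ is covered by at most two cliques $S_B\setminus\{v\}$. Your explicit check that every $H$-edge inside $B$ has both ends in $S_B$, which makes the description of $N_J(v)$ exact, spells out what the paper uses implicitly when it notes that both endpoints of an $H$-edge lie in $S_B$.
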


\begin{proof}
An edge of $H$ lies in a unique block $B$, and both of its endpoints belong to $S_B$, so $H\subseteq J$.
Every block of $G$ is a clique, so every pair inside $S_B\subseteq B$ is an edge of $G$, which gives $J\subseteq G$.

We now show that no vertex $v$ lies in three of the sets, say $S_{B_1}$, $S_{B_2}$ and $S_{B_3}$ for three distinct blocks.
For each $i$, since $v\in S_{B_i}$, there is a neighbour $u_i\in B_i\setminus\{v\}$ with $vu_i\in E(H)$.
By \Cref{lem1} the vertices $u_1,u_2,u_3$ are pairwise nonadjacent in $G$, hence also in $H$.
Together with $v$ they form an induced claw in $H$, a contradiction.
So every vertex belongs to at most two of the sets $S_B$.

Finally, we show that $J$ is claw-free.
The neighbours of a vertex $v$ in $J$ are the vertices of the sets $S_B$ that contain $v$, other than $v$ itself.
There are at most two such sets, and each is a clique in $J$.
So the neighbourhood of $v$ is the union of at most two cliques, and it cannot contain three pairwise nonadjacent vertices. \qed
\end{proof}

\Cref{completion} completes any claw-free solution inside the original blocks without deleting an additional edge. We will refer to adding of the edges to get the above clique structure inside each block as completion. We next construct a forest whose line graph is the completed graph.

\begin{lemma}\label{lem3}
Let $S_B\subseteq B$ be a set for every block $B\in\blocks$ such that every vertex lies in at most two of the sets $S_B$.
Let $J_S$ be the graph on $V(G)$ with edge set $\bigcup_{B}\binom{S_B}{2}$.
Then $J_S$ is the line graph of a forest $R$, and $R$ can be built in $O(n)$ time from the sets $S_B$.
\end{lemma}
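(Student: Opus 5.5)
The plan is to build the root graph $R$ explicitly and check that $L(R)=J_S$ and that $R$ is acyclic. First I discard the trivial sets: let $\blocks'=\{B : |S_B|\ge 2\}$, since a set with $|S_B|\le 1$ contributes no edges and can be ignored. This may only lower the number of sets containing a vertex, so the hypothesis still holds. The vertex set of $R$ consists of one node $x_B$ for each $B\in\blocks'$, together with private nodes. For each $v\in V(G)$, let $k_v\le 2$ be the number of sets $S_B$, $B\in\blocks'$, containing $v$. The edge $e_v$ of $R$ is defined by cases:
\begin{itemize}
\item if $k_v=2$, with $v\in S_B\cap S_{B'}$, then $e_v=x_Bx_{B'}$;
\item if $k_v=1$, with $v\in S_B$, then $e_v=x_Bp_v$ for a new private node $p_v$;
\item if $k_v=0$, then $e_v=p_vq_v$, an isolated edge on two new private nodes.
\end{itemize}
Distinct $v$ give distinct edges: two distinct blocks share at most one vertex, so at most one $v$ lies in both $S_B$ and $S_{B'}$, and the other cases use private endpoints. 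Hence $R$ is a simple graph and $v\mapsto e_v$ is a bijection between $V(G)$ and $E(R)$.

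Next I verify adjacency. Two edges $e_u,e_v$ share an endpoint exactly when they share some $x_B$, since private nodes are never shared. This happens exactly when $u,v\in S_B$ for a common $B$, which is exactly when $uv\in\binom{S_B}{2}\subseteq E(J_S)$. Conversely, every edge of $J_S$ arises this way. So $L(R)\cong J_S$ via $v\mapsto e_v$.

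The main obstacle is acyclicity. Private nodes have degree one, so they lie on no cycle, and any cycle in $R$ uses only nodes $x_B$ and edges $e_v$ with $k_v=2$. Suppose $x_{B_1},e_{v_1},x_{B_2},e_{v_2},\dots,x_{B_t},e_{v_t}$ is a cycle with $t\ge 2$. Then $v_i\in B_i\cap B_{i+1}$ (indices mod $t$), and $B_1,v_1,B_2,v_2,\dots,B_t,v_t$ is a closed walk in the block-vertex forest $T$. The blocks $B_i$ are distinct, and the $v_i$ are distinct because $e_{v_i}$ are distinct edges. Hence this walk is a cycle of $T$; even for $t=2$ it is a $4$-cycle, because $v_1\ne v_2$. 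This contradicts the fact from \Cref{sec:blocks} that $T$ is a forest. Therefore $R$ is a forest.

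For the running time, the sets have total size $\sum_B|S_B|\le\sum_B|B|=O(n)$ by \eqref{eqn2}. One pass over the sets records, for each vertex, the at most two blocks in $\blocks'$ containing it. $R$ then has $O(n)$ nodes and exactly $n$ edges, and each edge $e_v$ is created in constant time. The whole construction therefore takes $O(n)$ time.
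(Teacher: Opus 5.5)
Your proposal is correct and follows essentially the same route as the paper: one node per retained $S_B$ with $|S_B|\ge 2$, an edge $e_v$ per vertex using private endpoints where needed, simplicity because distinct blocks share at most one vertex, adjacency checked through the shared block nodes, and acyclicity by lifting a cycle of $R$ to a cycle in the block-vertex forest $T$. The only difference is cosmetic: you bound $\sum_B|S_B|$ via \eqref{eqn2}, while the paper bounds it by $2n$ using the at-most-two condition.
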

\begin{proof}
We can getrid of all sets $S_B$ of size at most one, since they contribute no edges to $J_S$. In the rest of the construction, membership always refers to these retained sets. For each retained $S_B$, create a vertex $r_B$ of $R$.
For every vertex $v$ of $G$ create an edge $e_v$ of $R$ as follows.
If $v$ lies in two sets $S_B$ and $S_C$, let $e_v=r_Br_C$.
If $v$ lies in exactly one set $S_B$, let $e_v$ join $r_B$ to a new private leaf.
If $v$ lies in no set, let $e_v$ be a new edge on two new vertices, disjoint from other vertices of $R$.

No two of these edges are parallel. If $e_u$ and $e_v$ both joined $r_B$ and $r_C$, then $u,v\in B\cap C$, but distinct blocks share at most one vertex. Any cycle in $R$ would use only vertices $r_B$, since all other vertices are private endpoints. A cycle $r_{B_1}r_{B_2}\cdots r_{B_k}$ in $R$ would give an incidence cycle $B_1v_1B_2v_2\cdots B_kv_k$ in $T$, which is a forest. Hence $R$ is a forest.

For every vertex \(v\in V(G)\), the construction creates a unique edge \(e_v\in E(R)\). Hence the map
$$
\phi:V(J_S)\longrightarrow E(R),\qquad \phi(v)=e_v,
$$
is a bijection. It remains to show that $\phi$ preserves adjacency. By definition, $u,v\in V(G)$ are adjacent in $J_S$ if and only if they lie in some retained $S_B$. Every vertex of such an $S_B$ corresponds to an edge incident with $r_B$ and private endpoints are never shared. Thus $u,v\in S_B$ for some retained block if and only if $e_u$ and $e_v$ share an endpoint, or equivalently $e_ue_v\in E(L(R))$. Hence
$$
uv\in E(J_S)
\quad\Longleftrightarrow\quad
e_ue_v\in E(L(R)),
$$
and therefore \(J_S\cong L(R)\).

The construction of $R$ takes $O(n)$ time. Since every vertex belongs to at most two original sets $S_B$, we have
$$\sum_{B\in\mathcal B}|S_B|\le 2n.$$
We go through the sets and discard those of size at most one, create $r_B$ for each retained set, and record only retained memberships for each vertex $v$. This takes $O(n)$ time. Processing each vertex once then creates $e_v$ in constant time from its zero, one, or two retained memberships. Hence $R$ has exactly $n$ edges and $O(n)$ vertices.

\end{proof}

\section{A Common Selection Problem}\label{sec:selection}

We will show that \Cref{completion,lem3} turn all three deletion problems into the same problem. Choose in every block a set of vertices that will will have all possible edges inside it, such that no vertex is chosen in more than two blocks.

\begin{definition}
A \emph{selection} is a family $S=(S_B)_{B\in\blocks}$ of sets $S_B\subseteq B$ such that every vertex of $G$ lies in at most two of the sets $S_B$.
\end{definition}

Given a selection, we keep exactly those edges of $G$ whose two endpoints lie in a common set $S_B$, and we delete all other edges.
Every block is a clique, so all pairs inside $S_B$ are edges of $G$, and different blocks have disjoint edge sets.
Hence the number of kept edges is $\sum_{B}\binom{|S_B|}{2}$.
Let
\begin{equation}\label{eqn4}
\OPT(G)=\max\Big\{\sum_{B\in\blocks}\binom{|S_B|}{2} \mid (S_B)_{B\in\blocks}\text{ is a selection}\Big\}.
\end{equation}

\begin{proposition}\label{same_ineq}
For every block graph $G$,
\[
\CFD(G)=\LGD(G)=\LFD(G)=m-\OPT(G).
\]
\end{proposition}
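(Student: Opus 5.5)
The plan is to prove the two inequalities $\LFD(G)\le m-\OPT(G)$ and $m-\OPT(G)\le\CFD(G)$, and then combine them with the chain \eqref{ineq}, namely $\CFD(G)\le\LGD(G)\le\LFD(G)$. Together these force all four quantities to be equal.

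For $\LFD(G)\le m-\OPT(G)$, fix a selection $S=(S_B)_{B\in\blocks}$ attaining $\OPT(G)$. Let $J_S$ be the spanning subgraph whose edge set is $\bigcup_B\binom{S_B}{2}$. Since every block is a clique, each pair inside $S_B\subseteq B$ is an edge of $G$, so $J_S\subseteq G$. Edges of different blocks are distinct, so $|E(J_S)|=\sum_B\binom{|S_B|}{2}=\OPT(G)$. By \Cref{lem3}, $J_S\cong L(R)$ for a forest $R$. Deleting the $m-\OPT(G)$ edges of $E(G)\setminus E(J_S)$ therefore yields the line graph of a forest, which gives the bound.

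For $m-\OPT(G)\le\CFD(G)$, let $D$ be an optimal claw-free deletion set and set $H=G-D$. Apply \Cref{completion} to $H$. This produces sets $S_B$ with every vertex in at most two of them, so $(S_B)$ is a selection, and a graph $J$ with $H\subseteq J\subseteq G$.

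The one point needing care is to check that $E(J)$ is exactly $\bigcup_B\binom{S_B}{2}$, so that $|E(J)|=\sum_B\binom{|S_B|}{2}\le\OPT(G)$.
\begin{itemize}
\item Each edge of $H$ lies in a unique block $B$, and both of its endpoints are in $S_B$ by definition, so $E(H)\subseteq\bigcup_B\binom{S_B}{2}$.
\item The added edges are by construction the pairs inside the sets $S_B$.
\end{itemize}
Hence $m-\CFD(G)=|E(H)|\le|E(J)|\le\OPT(G)$, which rearranges to the claimed inequality.

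Putting the pieces together gives
\[
m-\OPT(G)\le\CFD(G)\le\LGD(G)\le\LFD(G)\le m-\OPT(G),
\]
so equality holds throughout. I expect no real obstacle, since the substantive work is already done in \Cref{completion} and \Cref{lem3}. The only details to state explicitly are the edge-count identity for $J$ and the fact that distinct blocks have disjoint edge sets, so the sum $\sum_B\binom{|S_B|}{2}$ counts no edge twice.
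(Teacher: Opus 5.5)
Your proposal is correct and matches the paper's proof: an optimal selection with \Cref{lem3} gives $\LFD(G)\le m-\OPT(G)$, \Cref{completion} gives $\CFD(G)\ge m-\OPT(G)$, and the chain \eqref{ineq} closes the sandwich. Your explicit check that $E(J)=\bigcup_B\binom{S_B}{2}$, with no edge counted twice across blocks, is a detail the paper uses implicitly when it writes $|E(J)|=\sum_B\binom{|S_B|}{2}$.
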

\begin{proof}
Take an optimal selection $S$.
By \Cref{lem3} the graph $J_S$ is the line graph of a forest, and it is a spanning subgraph of $G$ with $\OPT(G)$ edges.
Deleting the other $m-\OPT(G)$ edges of $G$ therefore gives a line graph of a forest, so $\LFD(G)\le m-\OPT(G)$.

Conversely, let $H$ be any claw-free spanning subgraph of $G$.
By \Cref{completion} the sets $S_B$ form a selection, and $H\subseteq J$, so
\[
|E(H)|\le |E(J)|=\sum_{B}\binom{|S_B|}{2}\le\OPT(G).
\]
So every claw-free solution deletes at least $m-\OPT(G)$ edges, that is, $\CFD(G)\ge m-\OPT(G)$.

Putting these inequalities together with the general inequality in \Cref{ineq} we get
\[
\LFD(G)\le m-\OPT(G)\le\CFD(G)\le\LGD(G)\le\LFD(G),
\]
so all these numbers are equal.
\end{proof}

\begin{corollary}\label{cor:diamond}
Let $\CDFD(G)$ be the minimum number of edges whose deletion makes $G$ both claw-free and diamond-free, where a diamond is $K_4$ minus one edge. For every block graph $G$,
\[
\CDFD(G)=\CFD(G)=\LGD(G)=\LFD(G)=m-\OPT(G).
\]
An optimal deletion set can be found in $O(n+m)$ time.
\end{corollary}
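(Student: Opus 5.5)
We sandwich $\CDFD(G)$ between quantities already known to equal $m-\OPT(G)$, exactly as in \Cref{same_ineq}. Every claw-and-diamond-free graph is claw-free, so any deletion set achieving claw-and-diamond-freeness also achieves claw-freeness. This gives $\CFD(G)\le\CDFD(G)$ for every graph. For the reverse inequality it suffices to show that the line graph of a forest is diamond-free. Then an optimal selection $S$ produces, via \Cref{lem3}, a spanning subgraph $J_S\cong L(R)$ with $R$ a forest and $\OPT(G)$ edges, which is both claw-free and diamond-free. This yields $\CDFD(G)\le m-\OPT(G)=\LFD(G)$, and combining with \Cref{same_ineq} all five quantities coincide.

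The key step is the diamond-freeness of $L(R)$ for a forest $R$; this is the only genuinely new ingredient, and I expect it to be routine. Suppose $L(R)$ contains a diamond on edges $e_1,e_2,e_3,e_4$ of $R$, where $e_1e_2$ is the missing pair. Then $e_3$ and $e_4$ are adjacent to each other and to both $e_1$ and $e_2$, so $\{e_1,e_3,e_4\}$ and $\{e_2,e_3,e_4\}$ are triangles in $L(R)$. A triangle in a line graph comes either from three edges sharing a common endpoint or from the three edges of a triangle in $R$. Since $R$ is a forest, it has no triangles, so each of these triangles comes from a common endpoint. The edges $e_3$ and $e_4$ share exactly one endpoint $x$, because $R$ has no parallel edges. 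Hence both $e_1$ and $e_2$ contain $x$, which makes $e_1$ and $e_2$ adjacent in $L(R)$, a contradiction.

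Alternatively, one can argue directly on $J_S$. Every vertex's neighbourhood in $J_S$ is a union of at most two cliques $S_B\setminus\{v\}$ and $S_C\setminus\{v\}$ from distinct blocks, and by \Cref{lem1} there are no edges between these two cliques. A diamond centred on adjacent vertices $v,w$ would require their two common neighbours to lie in the same $S_B$ as $v$ and $w$, forcing the missing edge to be present. Either argument suffices.

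For the algorithmic claim, we output the deletion set $E(G)\setminus E(J_S)$ for the optimal selection computed by the dynamic program behind \Cref{main}, which runs in $O(n+m)$ time. Its correctness as a $\CDFD$ solution follows from the equality just established.
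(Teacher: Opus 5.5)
Your proof is correct and follows the paper's route. You sandwich $\CDFD(G)$ between $\CFD(G)$ and $\LFD(G)$ using \Cref{same_ineq}, and the only new ingredient is the diamond-freeness of the line graph of a forest, which you prove with the same observation the paper uses: three pairwise intersecting forest edges must share a common endpoint, so the common neighbours of an adjacent pair are themselves adjacent. Your alternative argument directly on $J_S$ via \Cref{lem1} is also valid, though not needed.
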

\begin{proof}
Clearly $\CFD(G)\leq\CDFD(G)$. For any two forest edges meeting at $x$, an edge adjacent to both must also meet at $x$, since otherwise those three edges form a cycle. Thus the common neighbors of an adjacent pair in the line graph form a clique, as a result there are no induced diamond. Therefore, the line graph of a forest is claw-free and diamond-free. 

Hence $\CDFD(G)\leq\LFD(G)$, and \Cref{same_ineq} gives the equality. The algorithm in \Cref{sec:dp,sec:linear} returns the corresponding set.
\end{proof}

This proves the equality part of \Cref{main}. We next compute $\OPT(G)$ and an optimal selection in linear time.

\section{Dynamic Programming on the Block-Vertex Tree}\label{sec:dp}

We now show how to compute $\OPT(G)$. Recall that in a selection every vertex can be chosen in at most two blocks. One can think of each vertex as having two \emph{slots}, one for each block in which that vertex may be chosen.

Root every component of the block-vertex forest $T$ at a vertex node. We process the tree from the leaves upward. The main observation is that, when we reach a vertex $v$, we only need to know whether $v$ has already been chosen in its parent block. If it has, then one of its two slots is already used, otherwise both slots are still available. Thus only two states are needed.

For $x\in\{0,1\}$, let $D_v(x)$ denote the maximum number of edges that can be kept in the blocks corresponding the block nodes in the tree below $v$, where $x=1$ means that $v$ is chosen in its parent block and $x=0$ means that it is not. Hence $v$ can be chosen in at most $2-x$ of its child blocks.

Similarly, for a block $B$ with parent vertex $p$, let $M_B(x)$ denote the maximum number of edges that can be kept in $B$ and blocks below it, where $x=1$ means that $p$ is chosen in $B$ and $x=0$ means that it is not.

We first consider a vertex $v$.
The question is how $v$ should use its available slots among its child blocks.
Let $B$ be a child block of $v$. If $v$ is not chosen in $B$, the contribution of the subtree rooted at $B$ is $M_B(0)$. If it is chosen, the contribution is $M_B(1)$. Thus the gain from using one of the slots of $v$ on $B$ is
$$
\Delta_B=M_B(1)-M_B(0).
$$
This gain is nonnegative, since any solution counted by $M_B(0)$ remains feasible when the parent vertex $v$ is also added to the chosen set of $B$.

Therefore, to obtain the best solution at $v$, we start with
$$
\sum_{B\in\ch(v)}M_B(0)
$$
 where $\ch(v)$ denotes the set of child blocks of $v$, and use the available slots on the child blocks with the largest gains. Let $g_1(v)\ge g_2(v)\ge 0$ be the two largest values of $\Delta_B$ among the child blocks of $v$, using $0$ when there are fewer than two child blocks. Then
\begin{equation}\label{eqn5}
D_v(0)=\sum_{B\in\ch(v)}M_B(0)+g_1(v)+g_2(v),
\qquad
D_v(1)=\sum_{B\in\ch(v)}M_B(0)+g_1(v).
\end{equation}

Thus, if $v$ is chosen in its parent block, it loses the opportunity to use its second-best child block. We call this loss the \emph{cost} of choosing $v$ in its parent block, and denote it as
\begin{equation}\label{eqn6}
c_v=D_v(0)-D_v(1)=g_2(v).
\end{equation}

We next consider a block $B$.
Once the values for its child vertices are known, we decide which of these vertices should be chosen in $B$.
Let $p$ be the parent of $B$. The other vertices of $B$, namely $B\setminus\{p\}$, are its children in the rooted block-vertex forest.

Suppose that we choose a set $A\subseteq B\setminus\{p\}$ of these child vertices in $B$. If the parent $p$ is also chosen, then the chosen set in $B$ has $|A|+1$ vertices, otherwise it has $|A|$ vertices. Thus, in state $x\in\{0,1\}$, the number of kept edges contributed by $B$ is $\binom{|A|+x}{2}$.

There is, however, a cost to choosing a child vertex $u$ in $B$. If $u$ is not chosen in $B$, its subtree contributes $D_u(0)$, whereas if it is chosen, its subtree contributes $D_u(1)$. By~\eqref{eqn6}, the loss is exactly $c_u$. Therefore,
\begin{equation}\label{eqn7}
M_B(x)=
\sum_{u\in B\setminus\{p\}}D_u(0) +
\max_{A\subseteq B\setminus\{p\}}\Big\{\binom{|A|+x}{2}-\sum_{u\in A}c_u\Big\}.
\end{equation}

Here we begin with the best value obtainable below every child of $B$ when none of them is chosen in $B$. We then choose some of these children. Doing so creates clique edges inside $B$, but choosing a child $u$ costs $c_u$ because one of its two slots is now used by $B$.

\begin{proposition}\label{dp_prop}
If the values $D_v(x)$ and $M_B(x)$ are computed from the leaves of the rooted block-vertex forest upward using the recurrences above, then for each component with root $r$, the value $D_r(0)$ is the maximum number of edges that can be kept in that component. Consequently,
$$
\OPT(G)=\sum_r D_r(0),
$$
where the sum is over the roots of all components of the block-vertex forest.

Moreover, if we store the choices attaining the maxima in the recurrences, then an optimal selection can be recovered by tracing these choices from the roots downward.
\end{proposition}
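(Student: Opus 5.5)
We prove correctness of the recurrences by induction on the height of subtrees of the rooted block-vertex forest. The induction carries exact semantic meanings for the two tables. For a vertex node $v$ with parent block $P$, let $T_v$ be the subtree rooted at $v$. Let $\blocks_v$ be the set of blocks appearing in $T_v$. A \emph{partial selection below $v$ in state $x$} is a family $(S_B)_{B\in\blocks_v}$ with $S_B\subseteq B$ in which every vertex of $T_v$ other than $v$ lies in at most two sets, and $v$ lies in at most $2-x$ sets. We claim that $D_v(x)$ is the maximum of $\sum_{B\in\blocks_v}\binom{|S_B|}{2}$ over such families. Analogously, for a block $B$ with parent $p$, we claim that $M_B(x)$ is the maximum over families on the blocks of $T_B$ in which $p\in S_B$ iff $x=1$, $p$ is in no other set of the subtree, and all other vertices are in at most two sets.

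The key structural fact, used in both inductive steps, is that subtrees of distinct children are disjoint in $T$. Since $T$ is a forest and every vertex of $G$ appears as exactly one node, the constraint ``at most two sets'' decouples across children. The only vertices shared between a node and its child subtrees are the connecting vertex nodes themselves. Likewise, the objective is a sum over blocks, so it splits as a sum over child subtrees plus the block's own term.

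\emph{Block step.} Fix $B$ with parent $p$ and state $x$. A family for $T_B$ is determined by $A=S_B\setminus\{p\}\subseteq B\setminus\{p\}$ together with, for each child $u$, a family on $T_u$. That family is in state $1$ if $u\in A$ and in state $0$ otherwise. By induction, the best value is $\binom{|A|+x}{2}+\sum_{u\in A}D_u(1)+\sum_{u\notin A}D_u(0)$. Rewriting $D_u(1)=D_u(0)-c_u$ via \eqref{eqn6} and maximizing over $A$ gives exactly \eqref{eqn7}.

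\emph{Vertex step.} Fix $v$ in state $x$. The decision is which set $C\subseteq\ch(v)$ of child blocks contain $v$, with $|C|\le 2-x$. Each child $B$ then contributes $M_B(1)$ if $B\in C$ and $M_B(0)$ otherwise. The value is $\sum M_B(0)+\sum_{B\in C}\Delta_B$. Since every $\Delta_B\ge 0$, the optimum picks the $2-x$ largest gains, padding with $0$ when there are fewer children. This yields \eqref{eqn5}. We justify $\Delta_B\ge 0$ as in the text: adding $p$ to $S_B$ keeps feasibility, because in state $0$ of $M_B$ the vertex $p$ is in no set of $T_B$, and it does not decrease the number of edges. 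Leaves are the base case: a leaf vertex node has $D_v(x)=0$, and every block has at least one child since $|B|\ge2$.

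At a root $r$, which has no parent, state $0$ allows $r$ in up to two child blocks. So a family counted by $D_r(0)$ is exactly a selection restricted to the component, and $D_r(0)$ is the component's optimum. Components have disjoint block sets and vertex sets, so $\OPT(G)=\sum_r D_r(0)$. For recovery, we store the optimal $A$ at each block and state, and the chosen top child blocks at each vertex and state. Tracing downward from each root in state $0$, the state passed to each child is determined by these choices, and the assembled sets form a selection attaining the value; this follows by the same induction read in reverse. The main obstacle is stating the decoupling precisely: each vertex of $G$ is constrained only through its own node's state, so the global ``at most two'' condition equals the conjunction of the local ones. Once the state definitions above are fixed, this is routine.
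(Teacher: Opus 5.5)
Your proof is correct and takes essentially the same approach as the paper. It argues by induction from the leaves: the vertex step takes the $2-x$ largest nonnegative gains $\Delta_B$, and the block step maximizes over the chosen child set $A$ with costs $c_u=D_u(0)-D_u(1)$; the optimum is then $\sum_r D_r(0)$, and an optimal selection is recovered by tracing the stored choices downward. Your explicit definition of the DP states as partial selections, and your check that the ``at most two'' constraint decouples across child subtrees, make precise what the paper's proof leaves informal.
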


\begin{proof}
We argue by induction from the leaves of the rooted block-vertex forest.

Suppose first that the current node is a vertex $v$. Once we decide in which child blocks $v$ is chosen, the subtrees rooted at those child blocks are independent. The only restriction linking them is that $v$ can be chosen in at most $2-x$ child blocks when its state is $x$. Hence the best value is obtained by using the available slots on the child blocks with the largest gains, which gives~\eqref{eqn5}. Thus $D_v(x)$ is the optimum value for the subtree rooted at $v$.

Now suppose the current node is a block $B$. Once we fix the set $A$ of child vertices chosen in $B$, the subtrees rooted at these children are independent. A chosen child $u$ contributes $D_u(1)$ instead of $D_u(0)$, a loss of $c_u$, while the vertices chosen in $B$ contribute $\binom{|A|+x}{2}$ edges inside the clique $B$. Maximizing over all choices of $A$ therefore gives~\eqref{eqn7}, so $M_B(x)$ is the optimum value for the subtree rooted at $B$.

At a root vertex $r$ no parent block is present, so both slots are free and the optimum for that component is $D_r(0)$. Since different components are independent, their values add, giving

$$
\OPT(G)=\sum_r D_r(0).
$$

Finally, by storing the choices attaining each maximum, an optimal selection can be recovered by tracing the decisions from the roots downward.
\end{proof}

\section{Linear-Time Algorithm}\label{sec:linear}

At a vertex node, the dynamic program scans the child blocks, sums their values $M_B(0)$, and keeps the two largest gains $M_B(1)-M_B(0)$. The remaining difficulty is the block computation in \eqref{eqn7}, where it appears to examine every subset of child vertices.

Let $B$ be a block of size $s$, with parent vertex $p$. Its child vertices are $B\setminus\{p\}$. Recall that choosing a child $u$ in $B$ has cost
$c_u=D_u(0)-D_u(1),$
because one of the two available slots of $u$ is then used by $B$.

Suppose first that we decide to choose exactly $j$ child vertices in $B$. In state $x\in\{0,1\}$, the parent $p$ is chosen in $B$ exactly when $x=1$, so the chosen set in $B$ contains $j+x$ vertices. Since $B$ is a clique, these vertices contribute $\binom{j+x}{2}$ kept edges.

For a fixed value of $j$, this contribution depends only on the number of chosen children, and not on which ones are chosen. Thus the best choice of $j$ children is simply to take the $j$ children with the smallest costs. If the costs are ordered as
$$q_1\le q_2\le\cdots\le q_{s-1},$$
then~\eqref{eqn7} can be written as
\begin{equation}\label{eqn8}
M_B(x)=
\sum_{u\in B\setminus\{p\}}D_u(0)
+
\max_{0\le j\le s-1}
\Big\{
\binom{j+x}{2}-\sum_{i=1}^{j}q_i
\Big\}.
\end{equation}

This removes the need to examine all subsets of the children. However, sorting the costs separately at every block would introduce an additional logarithmic factor. We next show that the relevant costs lie in a small range, which allows us to replace sorting by grouping.

\begin{lemma}\label{cutoff}
Let $B$ be a block of size $s$. If a child vertex $v$ of $B$ has cost $c_v\ge s$, then no set attaining the maximum in~\eqref{eqn7} contains $v$.
\end{lemma}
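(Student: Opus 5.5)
We argue by exchange: we compare any set containing $v$ with the same set with $v$ removed. Fix a state $x\in\{0,1\}$ and let $A\subseteq B\setminus\{p\}$ be any set with $v\in A$. Put $A'=A\setminus\{v\}$. The objective in \eqref{eqn7} is $f(A)=\binom{|A|+x}{2}-\sum_{u\in A}c_u$, so we will show $f(A)<f(A')$. Then $A$ cannot attain the maximum, which is what the lemma claims.

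The first step is to compute the difference. We have
\[
f(A)-f(A')=\binom{|A|+x}{2}-\binom{|A|-1+x}{2}-c_v=(|A|-1+x)-c_v .
\]
This uses the identity $\binom{k}{2}-\binom{k-1}{2}=k-1$ with $k=|A|+x$. In words, adding $v$ creates one new kept edge to each of the other $|A|-1+x$ chosen vertices of $B$, and it costs $c_v$.

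The second step is to bound this marginal gain. Since $A\subseteq B\setminus\{p\}$, we have $|A|\le s-1$, and $x\le 1$. Hence $|A|-1+x\le s-1$. With $c_v\ge s$ this gives
\[
f(A)-f(A')\le s-1-s=-1<0 .
\]
So removing $v$ strictly improves every such set, and no maximizer contains $v$.

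The only point needing care is that the bound must hold uniformly for both states $x=0,1$, including the extreme case where $A$ is all of $B\setminus\{p\}$ and $x=1$. In that case the chosen set is all of $B$, and $v$ gains exactly $s-1$ edges. This extreme case is what forces the threshold $s$ rather than something smaller. The inequality is strict, so the conclusion applies to every maximizer, not merely to some maximizer. Nothing beyond the definition \eqref{eqn7} is needed.
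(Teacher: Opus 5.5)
Your proof is correct and uses the same exchange argument as the paper: removing $v$ from $A$ recovers $c_v\ge s$ while losing at most $s-1$ edges inside $B$, so every set containing $v$ is strictly beaten. Your exact computation of the marginal difference $(|A|-1+x)-c_v$ is just a more explicit form of the paper's bound.
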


\begin{proof}
Suppose that an optimal set $A$ contains $v$. Remove $v$ from $A$, so it is no longer chosen in the current block $B$. This does not remove $v$ from the graph or its child blocks. Its state changes from $1$ to $0$, so the optimum contribution below $v$ increases from $D_v(1)$ to $D_v(0)$, a gain of $c_v\geq s$.

Inside $B$, removing $v$ from the chosen set loses only the edges joining $v$ to the other chosen vertices of $B$. Since $B$ has $s$ vertices, at most $s-1$ such edges are lost. Hence the total change in the objective is at least
$c_v-(s-1)\ge 1.$

Thus removing $v$ from $A$ strictly improves the solution, contradicting the optimality of $A$.
\end{proof}
Therefore, for a block of size $s$, only costs $0,1,\ldots,s-1$
can be relevant. We create $s$ groups, indexed by these values. Each child $u$ with $c_u<s$ is placed in the group indexed by $c_u$, while children with $c_u\ge s$ are ignored by \Cref{cutoff}. Reading the groups from left to right gives all relevant costs in non-decreasing order without using a comparison sort.

Let
$$q_1\le q_2\le\cdots\le q_h$$
be the relevant costs obtained in this way, and define
$$C_j=\sum_{i=1}^{j}q_i,\qquad C_0=0.$$

Here $C_j$ is exactly the minimum total cost of choosing $j$ child vertices of $B$. Hence
\begin{equation}\label{eqn9}
M_B(x)=
\sum_{u\in B\setminus\{p\}}D_u(0)
+
\max_{0\le j\le h}
\Big\{
\binom{j+x}{2}-C_j
\Big\},
\end{equation}

The groups can be created and scanned in $O(s)$ time, and the values $C_j$ and the two maxima can be computed during the same scan. Thus both states of a block $B$ can be computed in $O(|B|)$ time. We also store, for each state, a value of $j$ attaining the maximum, this will later help us know how many of the cheapest children are chosen in $B$.

 First, the blocks of $G$ and the block-vertex forest $T$ can be constructed in $O(n+m)$ time. We root every component of $T$ at a vertex node and evaluate the dynamic program from the leaves upward.

At a vertex $v$, all its child blocks have already been processed. We scan these child blocks once, compute the sum of their $M_B(0)$ values, and keep track of the two largest gains $M_B(1)-M_B(0)$. This gives $D_v(0)$, $D_v(1)$, and hence the cost $c_v$, in time proportional to the number of child blocks of $v$. At a block $B$, as shown above, both $M_B(0)$ and $M_B(1)$ can be computed in $O(|B|)$ time using the groups.

Thus every block-vertex incidence is examined only a constant number of times. Since
$$\sum_{B\in\mathcal B}|B|=O(n),$$
all vertex and block computations together take $O(n)$ time. By \Cref{dp_prop}, when the bottom-up computation reaches the roots, we obtain
$$\OPT(G)=\sum_r D_r(0).$$

To recover the sets $S_B$, we traverse $T$ once more, now from the roots downward. During the bottom-up computation, we store the choices attaining each maximum.

Every root vertex starts in state $0$, since it has no parent block. Suppose that a vertex $v$ is reached in state $x$. Then $v$ has $2-x$ slots available for its child blocks. We choose $v$ in the child blocks corresponding to the largest gains used in the computation of $D_v(x)$, these blocks receive state $1$, while the remaining child blocks receive state $0$.

Now suppose that a block $B$ with parent $p$ is reached in state $x$. For this state we have stored a value $j$ attaining the maximum in~\eqref{eqn9}. We rebuild the groups and choose $j$ children of smallest cost. These vertices, together with $p$ when $x=1$, form the set $S_B$. Thus
$$S_B=
\begin{cases}
\{\text{the $j$ chosen children}\}, & x=0,\\[1mm]
\{p\}\cup\{\text{the $j$ chosen children}\}, & x=1.
\end{cases}$$

Each chosen child receives state $1$, and every unchosen child receives state $0$. Continuing in this way determines all the sets $S_B$. Since the same block-vertex incidences are scanned only a constant number of times, this top-down reconstruction also takes $O(n)$ time.

During the initial depth-first block decomposition, label each input edge with its unique block. Once the optimal selection is known, mark each vertex-block incidence $v\in S_B$. We scan the edge set once and delete an edge $uv$ labeled $B$ precisely when $u$ and $v$ are not both marked in $S_B$. Each test is constant time, so this produces a deletion set $D$ in $O(n+m)$ time. Since the selection is optimal, the remaining graph has exactly $\OPT(G)$ edges, and hence
$$|D|=m-\OPT(G).$$

By \Cref{same_ineq}, $D$ is optimal for both Claw-Free Edge Deletion and Line Graph Edge Deletion.

Finally, \Cref{lem3} constructs in $O(n)$ time a forest $R$ such that $G-D\cong L(R).$

Therefore an optimal deletion set and the corresponding root forest can be obtained in $O(n+m)$ time. Together with \Cref{same_ineq}, this completes the proof of \Cref{main}.

\section{Why Block Graphs Are Special}\label{sec:limits}

The equality
$$\CFD(G)=\LGD(G)=\LFD(G)$$
relies strongly on the structure of block graphs. There are three features behind the proof. First, every block is a clique, so whenever some vertices of a block remain active in a claw-free solution, we may safely add back all missing edges among them without leaving the original graph. This is the completion step of \Cref{completion}. Second, distinct blocks intersect in at most one vertex. Consequently, if a vertex belongs to several blocks, its neighbors coming from different blocks are pairwise nonadjacent. This is what turns the claw-free condition into the simple restriction that a vertex can belong to at most two selected sets $S_B$. Also, the block-vertex incidence graph is acylic. This acyclic structure is what ensures in \Cref{lem3} that the root graph obtained from the selected sets is itself a forest.

General graphs may not have these special properties. Consider the Wheel graph on $6$ vertices, $W_6$, with $c$ as the hub vertex. The graph is claw-free. Indeed, the neighborhood of $c$ is a $C_5$, which contains no independent set of size three, while for every cycle vertex the centre $c$ is adjacent to both of its other neighbors.

However, $W_6$ is not a line graph. In a line graph, the neighbors of a vertex corresponding to an edge $xy$ are covered by two cliques, root edges incident with $x$ and root edges incident with $y$. In $W_6$, the neighborhood of the centre $c$ is a $C_5$, and two cliques can cover at most four vertices of this cycle. Hence $\CFD(W_6)=0<\LGD(W_6)$.

\section{Extension to Larger Stars}\label{sec:stars}

Now we will see that the slot idea for the vertices works for other stars as well.
Fix an integer $r\ge 1$ and consider deleting the fewest edges so that the graph has no induced $K_{1,r+1}$ (a vertex with $r+1$ pairwise nonadjacent neighbours).
For $r=2$ this is the claw-free problem.

\begin{theorem}\label{thm:stars}
For every fixed $r\ge1$, the minimum number of edges whose deletion makes a block graph $K_{1,r+1}$-free can be computed in $O(n+m)$ time.
\end{theorem}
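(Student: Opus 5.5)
We follow the claw-free argument step by step, with the number of slots per vertex raised from two to $r$. Call a family $(S_B)_{B\in\blocks}$ with $S_B\subseteq B$ an \emph{$r$-selection} if every vertex of $G$ lies in at most $r$ of the sets $S_B$. Let $\OPT_r(G)$ be the maximum of $\sum_B\binom{|S_B|}{2}$ over all $r$-selections. The goal is to show that the minimum number of deletions equals $m-\OPT_r(G)$, and then to compute $\OPT_r(G)$ and an optimal $r$-selection in linear time.

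\emph{Step 1: completion.} Let $H$ be any $K_{1,r+1}$-free spanning subgraph of $G$, and define $S_B$ exactly as in \Cref{completion}. Suppose some vertex $v$ lies in $r+1$ of these sets. Choose an $H$-neighbour $u_i\in B_i\setminus\{v\}$ in each of the corresponding blocks. By \Cref{lem1} the $u_i$ are pairwise nonadjacent, so $v$ and the $u_i$ form an induced $K_{1,r+1}$ in $H$, a contradiction. Hence $(S_B)$ is an $r$-selection.

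Now complete each $S_B$ to a clique to obtain $J$. In $J$, the neighbourhood of $v$ is the union of at most $r$ cliques, so it contains no independent set of size $r+1$. Thus $J$ is $K_{1,r+1}$-free, $H\subseteq J\subseteq G$, and $|E(H)|\le\OPT_r(G)$.

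Conversely, take any $r$-selection and let $J_S$ be its clique union. Each neighbourhood in $J_S$ is covered by at most $r$ cliques, so $J_S$ is $K_{1,r+1}$-free, and it is a subgraph of $G$ because blocks are cliques. This gives the equality. No analogue of \Cref{lem3} is needed here, since the theorem only concerns $K_{1,r+1}$-freeness.

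\emph{Step 2: dynamic program.} Root $T$ as before. Let $D_v(x)$, for $x\in\{0,1\}$, denote the optimum below $v$ when $v$ has $r-x$ slots left for its child blocks. Let $M_B(x)$ be as in \Cref{sec:dp}. With gains $\Delta_B=M_B(1)-M_B(0)\ge 0$ sorted as $g_1\ge g_2\ge\cdots$, we get
\[
D_v(x)=\sum_{B\in\ch(v)}M_B(0)+\sum_{i=1}^{r-x}g_i(v),
\]
so the cost of choosing $v$ in its parent block is $c_v=g_r(v)$. The recurrences \eqref{eqn7} and \eqref{eqn8}, and the proof of \Cref{dp_prop}, then go through verbatim.

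We only need the $r$ largest gains at each vertex. Since $r$ is fixed, we can maintain them during a single scan of the child blocks in $O(r\cdot|\ch(v)|)=O(|\ch(v)|)$ time, using a small sorted buffer.

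\emph{Step 3: linearity.} \Cref{cutoff} did not depend on the number of slots: removing $v$ from $A$ gains $c_v$ and loses at most $s-1$ edges. So the bucket-grouping computation of \eqref{eqn9} again runs in $O(|B|)$ time per block. Summing with \eqref{eqn2}, the whole DP takes $O(n)$ time, in addition to the $O(n+m)$ needed for the block decomposition and the final edge scan.

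The reconstruction is top-down as before. A vertex in state $x$ is chosen in the child blocks realizing its $r-x$ largest gains.

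The main point requiring care is Step 1. We must check that the at-most-$r$ condition is necessary; this follows from \Cref{lem1}. We must also check that it is sufficient; this holds because an independent set in a union of $r$ cliques has at most $r$ vertices. Everything else is a direct parameter substitution, and for fixed $r$ the constant factor $r$ is absorbed into the $O(n+m)$ bound.
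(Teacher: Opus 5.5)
Your proposal is correct and follows the paper's proof essentially step for step. It uses the $r$-fold version of \Cref{completion} via \Cref{lem1}, the two-state DP with $D_v(x)=\sum_{B\in\ch(v)}M_B(0)+\sum_{i=1}^{r-x}g_i(v)$ and $c_v=g_r(v)$, and the unchanged block step with \Cref{cutoff} and bucketing. The one detail you should state explicitly, as the paper does, is that the gains are padded with zeros when $v$ has fewer than $r$ child blocks, so that $g_r(v)$ is defined.
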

\begin{proof}
We follow the proof for $r=2$ and list only what changes.
In a $K_{1,r+1}$-free spanning subgraph $H$, a vertex lies in at most $r$ of the sets $S_B$, since $r+1$ of them would give $r+1$ pairwise nonadjacent neighbors by \Cref{lem1}.
The completed graph $J$ is again $K_{1,r+1}$-free, because the neighbourhood of a vertex is the union of at most $r$ cliques.
Conversely, any family of sets in which every vertex lies in at most $r$ of them gives a $K_{1,r+1}$-free graph in the same way.
So the problem is~\eqref{eqn4} with ``at most two'' replaced by ``at most $r$''.
Pad the child-block gains with zeros and let $g_1(v)\ge\cdots\ge g_r(v)\ge0$ be the $r$ largest values. In state $x\in\{0,1\}$, vertex $v$ has $r-x$ free slots, so
\[
D_v(x)=\sum_{B\in\ch(v)}M_B(0)+\sum_{i=1}^{r-x}g_i(v),
\qquad c_v=D_v(0)-D_v(1)=g_r(v).
\]
The block step~\eqref{eqn7} and \Cref{cutoff} are unchanged, so the grouping method still takes $O(|B|)$ time per block, and for fixed $r$ the $r$ largest gains at a vertex can be kept up to date in $O(r)$ time per child.
Recovering the solution works as before.
\end{proof}

For $r=1$, the forbidden graph is $K_{1,2}=P_3$, and $P_3$-free graphs are disjoint unions of cliques. This gives a linear-time algorithm for the unweighted Cluster Deletion problem on block graphs, the weighted version was previously shown polynomial-time solvable by Bonomo, Dur\'an and Valencia-Pabon~\cite{bonomo2015complexity}. For $r=2$ we recover claw-free deletion. Here each vertex belongs to at most two selected sets, permitting the root-forest construction of \Cref{lem3}. For $r>2$, this line-graph construction need not apply.

\section{Conclusion}\label{sec:conclusion}

We gave a common structural and algorithmic treatment of claw-free and line-graph edge deletion on block graphs.
Completing a claw-free solution inside the original blocks costs no extra deleted edges and creates no claw, and the completed graph is the line graph of a forest.
This makes the two optimum values equal, and leaves a selection problem that a two-state dynamic program solves in $O(n+m)$ time.
The same idea covers $K_{1,r+1}$-free deletion for fixed $r$.

Our proof depends on two properties of block graph, two blocks meet in at most one vertex, and the block incidences form a forest.
Therefore, it does not extend directly to chordal graphs.
It would be interesting to find larger graph classes on which the two deletion problems still have the same optimum.

\section*{Declaration of generative AI use}
During the preparation of this manuscript, the authors used ChatGPT to help develop the idea for the linear-time algorithm, particularly the treatment of block nodes in the dynamic program on the block-vertex tree. ChatGPT also assisted with improving the language, TikZ code, and organization of the manuscript. The authors take full responsibility for the mathematical results, proofs, complexity analysis, and final content of the manuscript.

\bibliographystyle{splncs04}
\bibliography{references}

\end{document}